\newtheorem{lemma}{Lemma}     
\newcounter{mnotecount}[section]
\newcommand{\beq}{\begin{eqnarray}}
\newcommand{\eeq}{\end{eqnarray}}
\newcommand{\ben}{\begin{eqnarray*}}
\newcommand{\een}{\end{eqnarray*}}
\newtheorem{theorem}{Theorem}
\newtheorem*{theorem*}{Theorem}
\theoremstyle{definition}
\newcommand\ringring[1]{%
  {
   \mathop{\kern0pt #1}\limegre{
     \vbox to-1.85ex{
       \kern-2ex 
       \hbox to 0pt{\hss\normalfont\kern.1em \r{}\kern-.45em \r{}\hss}
       \vss 
     }   }
  }
}
\begin{document}
\title{Some inequalities among curvature invariants}
\author{Sebastian J. Szybka\, \orcidlink{0000-0003-3648-9285}}
\affiliation{Astronomical Observatory, Jagiellonian University}
\author{Yaroslava Kravetska\, \orcidlink{0009-0006-6666-7360}}
\affiliation{Astronomical Observatory, Jagiellonian University}
\author{Kornelia Nikiel\, \orcidlink{0009-0002-1834-4904}}
\affiliation{Institute of Theoretical Physics, Jagiellonian University}
\begin{abstract}
We prove an infinite sequence of inequalities among scalar polynomial invariants of symmetric rank-2 tensors of Segre types $A1$, $A3$, and $B$. In particular, these inequalities apply to the Ricci tensor and the energy-momentum tensor. If at least one of them is violated by the Ricci tensor, then the Einstein equations force violation of all classical energy conditions. In addition, we use one of the inequalities to generalize the known relation between the second Ricci invariant and the Kretschmann scalar.
\end{abstract}
\maketitle{}

\section{Introduction}

In general relativity and other metric gravity theories, curvature invariants provide coordinate independent characterization of the geometrical and physical properties of spacetime \cite{exact2003}. They can be applied in the equivalence problem and spacetime classification, to detect singularities and black hole horizons, and in many other contexts \cite{MacCallum:2015zaa}. Given their invariance, it is natural to use them to define physical measures in spacetime (e.g., gravitational entropy \cite{Rudjord_2008}, \cite{Clifton_2013}). Algebraic relations between curvature invariants, called syzygies, appear in spacetimes of given dimension and/or within particular classes of spacetimes \cite{zakhary}. In this paper, we address a problem related to syzygies---we investigate inequalities between polynomial curvature invariants.

This problem, in full generality, seems to be difficult and largely unexplored---a complete treatment lies beyond the scope of this paper. We restrict our attention to simple cases. Specifically, in a recent paper \cite{WCI} some inequalities between curvature invariants were proved for certain classes of spacetimes. We show that this problem can be tackled more generally using the Segre classification. The discovered properties of curvature invariants are algebraic consequences of the canonical form of generic symmetric rank-2 tensors of specific Segre types.

The inequalities between the Ricci curvature invariants that follow from the theorem proved in this paper translate, via the Einstein equations, into constraints on the energy-momentum tensor, satisfied by all physically relevant fields. Therefore, violation of these inequalities can be used to rule out unphysical spacetime geometries and shed light on the construction of exact solutions of the Einstein equations via the Synge method \cite{Ellis:2023css}.

\section{Segre classification}

Let $\bm P$ be the symmetric rank-2 tensor on the Lorentzian $D$-dimensional manifold $(M,g)$, where $D\geq 2$. At any point $p\in M$, the metric $\bm g$ identifies $\bm P$ with a linear map $\bm{\hat P}:T_pM\to T_pM$ represented by $P^a_{\; b}$. The eigenvalue problem for this map leads to the Segre classification and allows one to cast $\bm P$ into a canonical form.

The canonical form of $\bm P$ for a given Segre class in $D$ dimensions is a straightforward generalization of the well-known four-dimensional case \cite{exact2003}. It reads as follows
\begin{equation}\label{canonic}
\begin{split}
A1:\quad P_{ab} &= \sum_{i=1}^{D-1}\lambda_i \,x^{(i)}_a x^{(i)}_b - \lambda_D u_a u_b\;,\\
	A2:\quad P_{ab} &= \sum_{i=1}^{D-2}\lambda_i \,x^{(i)}_a x^{(i)}_b + \lambda_{D-1} k_{(a}l_{b)}+ \lambda_D (k_a k_b - l_a l_b)\;,\\
	A3:\quad P_{ab} &= \sum_{i=1}^{D-2}\lambda_i \,x^{(i)}_a x^{(i)}_b -2 \lambda_{D-1} k_{(a}l_{b)}\pm k_a k_b\;,\\
	B: \quad P_{ab} &= \sum_{i=1}^{D-2}\lambda_i \,x^{(i)}_a x^{(i)}_b  + \lambda_{D-1} k_{(a}l_{b)}+ k_{(a} x_{b)}\;,\\
\end{split}
\end{equation}
where $\lambda_i\in\mathbb{R}$ are eigenvalues or linear combinations thereof, $(\bm{x^{(1)}},\dots , \bm{x^{(D-1)}}, \bm u)$ is an orthonormal tetrad with $\bm u$ timelike, and $(\bm{x^{(1)}},\dots , \bm{x^{(D-2)}}, \bm k, \bm l)$ is a real partially null tetrad. The partially null tetrad satisfies
\begin{equation}
	\bm{x^{(i)}}\cdot \bm{x^{(k)}}=\delta^{ik}\;,\quad \bm k \cdot \bm l=-1\;,\quad \bm k^2=\bm l^2=0\;,
\end{equation}
and the remaining scalar products vanish.

We say that the spacetime (or a region thereof) is of Segre type $A1$, $A2$, $A3$, or $B$, if the Ricci tensor $\bm R$ is of the corresponding Segre type. The canonical tetrad associated with the Ricci tensor is called the Ricci principal tetrad.

The Segre types split further into subclasses depending on the degeneracies of eigenvalues and types of eigenvectors---timelike, spacelike, or null \cite{exact2003}. For example the type $A2$ contains two subclasses denoted in the Segre notation as $[11,Z\bar{Z}]$, $[(11),Z\bar{Z}]$. The first subclass has two distinct eigenvalues associated with spacelike eigenvectors (denoted with $11$) and a pair of complex conjugate eigenvalues associated with complex eigenvectors (denoted with $Z\bar{Z}$). A comma separates eigenvalues corresponding to spatial eigenvectors from others. The round brackets in the symbol $[(11),Z\bar{Z}]$ representing the second subclass indicate that the there are two eigenvectors (spatial) associated with the same eigenvalue (algebraic multiplicity $2$). The types of eigenvectors in the eigenspace are more explicitly indicated in the Plebański notation \cite{Pleb}. The both subclasses mentioned above are represented by $[S_1-S_2-Z-\bar{Z}]_{(1111)}$ and $[2S-Z-\bar{Z}]_{(111)}$, respectively. The orders of the corresponding factors in the minimal polynomial (nilpotency indices) are indicated in the subscript. They correspond to sizes of Jordan blocks.

\begin{lemma}\label{lem1}
 Let $(M,g)$ be a spacetime, and let a rank-2 tensor $\bm{ P'}$ be related to a symmetric rank-2 tensor \mbox{$\bm P$} at \mbox{$p\in M$} by $$\bm{P'}=\alpha \bm P+\beta \bm g\;,$$ with $\alpha,\beta\in \mathbb{R}$ and $\alpha\neq 0$. Then $\bm{P'}$ is of the same Segre type at $p$ as $\bm P$.
\end{lemma}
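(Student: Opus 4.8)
The plan is to reduce the statement to a purely linear-algebraic fact about the endomorphism $\bm{\hat P}$ and the effect of shifting it by a multiple of the identity. Raising an index with $\bm g$, and using $g^a{}_b=\delta^a{}_b$, the relation $\bm{P'}=\alpha\bm P+\beta\bm g$ becomes ${P'}^a{}_b=\alpha P^a{}_b+\beta\,\delta^a{}_b$ at $p$; that is, $\bm{\hat P'}=\alpha\bm{\hat P}+\beta\,\mathrm{Id}$ as linear maps on $T_pM$. Since the Segre type is precisely the data encoding the Jordan normal form of $\bm{\hat P}$ together with the causal character of its (generalized) eigenspaces, it suffices to show that passing from $\bm{\hat P}$ to $\alpha\bm{\hat P}+\beta\,\mathrm{Id}$ preserves both pieces of data.

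For the Jordan structure I would argue as follows. For any $\mu'\in\mathbb C$ set $\mu=(\mu'-\beta)/\alpha$; then $\bm{\hat P'}-\mu'\,\mathrm{Id}=\alpha(\bm{\hat P}-\mu\,\mathrm{Id})$, so $(\bm{\hat P'}-\mu'\,\mathrm{Id})^k=\alpha^{k}(\bm{\hat P}-\mu\,\mathrm{Id})^k$ for every $k$. Because $\alpha\neq 0$, the kernels of these operators coincide for all $k$, so the generalized eigenspace of $\bm{\hat P'}$ at $\mu'$ equals, as a subspace of $T_pM$, the generalized eigenspace of $\bm{\hat P}$ at $\mu$, and the nested chain of kernels determining the Jordan block sizes is identical. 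Thus the eigenvalues are transformed by the real affine map $\mu\mapsto\alpha\mu+\beta$---which sends real eigenvalues to real ones and complex-conjugate pairs to complex-conjugate pairs---while the multiset of Jordan block sizes is left unchanged.

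The causal data then comes for free: since the relevant (generalized) eigenspaces of $\bm{\hat P'}$ are literally the same subspaces of $T_pM$ as those of $\bm{\hat P}$, their causal character with respect to $\bm g$ is automatically preserved. Combining this with the invariance of the Jordan block sizes and of the reality pattern of the eigenvalues shows that $\bm{\hat P'}$ and $\bm{\hat P}$ carry identical Segre data, which is the claim. The genuine content of the lemma is exactly these two observations---that the generalized eigenspaces are preserved verbatim and that the nilpotent part is scale-invariant---both of which fail without the hypothesis $\alpha\neq0$.

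As a cross-check, and as an alternative route, I would substitute the canonical forms (\ref{canonic}) directly, using $g_{ab}=x_ax_b+y_ay_b+z_az_b-u_au_b$ in the $A1$ case and $g_{ab}=x_ax_b+y_ay_b-2k_{(a}l_{b)}$ in the half-null cases. For $A1$ one reads off immediately that $\bm{P'}$ is diagonal with eigenvalues $\alpha\lambda_i+\beta$. For $A3$ and $B$ the eigenvalues likewise shift to $\alpha\lambda_i+\beta$, and the only point requiring care---which I expect to be the single real obstacle in the hands-on approach---is the characteristic off-diagonal term: the coefficient of $k_ak_b$ (type $A3$) or of $k_{(a}x_{b)}$ (type $B$) gets multiplied by $\alpha$, but a rescaling $\bm k\mapsto c\bm k$, $\bm l\mapsto c^{-1}\bm l$ (with $c$ chosen using $\alpha\neq0$) preserves the half-null normalization $\bm k\cdot\bm l=-1$ and restores that coefficient to its canonical value $1$, up to a possible sign flip in the $A3$ case that keeps it within type $A3$. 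Hence the canonical form of the same Segre type is recovered, confirming the conclusion.
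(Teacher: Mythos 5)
Your proposal is correct and takes essentially the same route as the paper: both pass to the endomorphism $\bm{\hat P'}=\alpha \bm{\hat P}+\beta\,\bm{id}$ and conclude that the Jordan block structure, and hence the Segre type, is unchanged. Your kernel-chain argument $(\bm{\hat P'}-\mu'\,\bm{id})^k=\alpha^k(\bm{\hat P}-\mu\,\bm{id})^k$ and the remark that the preserved eigenspaces retain their causal character simply supply the justification the paper's terse proof asserts without detail, with the canonical-form cross-check as a harmless extra.
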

\begin{proof}
	The map $\bm{\hat P'}:T_pM\to T_pM$ associated with $\bm{P'}$ is given by an affine transformation $\bm{\hat P'}=\alpha \bm{\hat P}+\beta\,\bm{id}$, where $\bm{id}$ is the identity. If $\bm x$ is an eigenvector of $\bm{\hat P}$ with eigenvalue $\lambda$, then $\bm{\,x}$ is an eigenvector of $\bm{\hat P'}$ with eigenvalue $\alpha \lambda+\beta$. The Jordan normal form of $P'^a_{\;\;b}$ (representing $\bm{\hat P'}$) has blocks of the same sizes as that of $P^a_{\;b}$ (representing $\bm{\hat P}$), hence the Segre type is preserved.
\end{proof}

The lemma above implies that the Segre type of the traceless Ricci tensor $\bm S$ ($S_{ab}=R_{ab}-R/D\,g_{ab}$) is the same as that of the Ricci tensor $\bm R$. If the Einstein equations hold, then the Segre type of the energy-momentum tensor $\bm T$ is also the same as that of the Ricci tensor $\bm R$. The cosmological constant does not affect the Segre type. In summary, $\bm R$, $\bm S$, and $\bm T$ share the same canonical form presented in the previous section. The canonical orthonormal or partially null tetrad is common to all three tensors, but the coefficients $\lambda_i$ differ by linear transformations. In metric theories of gravity other than general relativity, the Segre type of the Ricci tensor $\bm R$ and the energy-momentum tensor $\bm T$ typically differ.

\section{Invariants}

Let $P_n$ denote the trace of the $n$th power of $\bm P$, i.e., $$P_n = P_{a_1 b_1} P_{a_2 b_2} \cdots P_{a_n b_n} g^{b_1 a_2} g^{b_2 a_3} \cdots g^{b_{n-1} a_n} g^{b_n a_1}\,.$$ We have
\begin{equation}
	P_1 = P, \quad P_2 = P_{ab} P^{ab}, \quad P_3 = P_{ab} P^{bc} P_c{}^a, \quad \dots
\end{equation}
If $P$ is the Ricci tensor $R$, then $R_1=R$ is the Ricci scalar and $R_n$ is the $n$th Ricci invariant.
\begin{theorem}\label{thm1}
	Let $(M,g)$ be a $D$-dimensional spacetime, $D\geq 2$, and $\bm{P}$ be a rank-2 symmetric tensor of Segre type $A1$, $A3$ or $B$. Then 
	\begin{equation}
		P_s^{2m}\leq D^{2m-s} P_{2m}^s\;,
	\end{equation}
for every $s,m\in \mathbb{N}$ such that $1\leq s<2m$.
\end{theorem}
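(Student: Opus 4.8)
The plan is to reduce the tensorial inequality to an elementary inequality among the real eigenvalues of $\bm{\hat P}$, and then invoke a power-mean estimate. The starting observation is that each invariant $P_n$ is a similarity invariant of $\bm{\hat P}$: since $P_n=\operatorname{tr}(\bm{\hat P}^{\,n})$ and the eigenvalues of $\bm{\hat P}^{\,n}$ are the $n$th powers of those of $\bm{\hat P}$ (counted with algebraic multiplicity), we have $P_n=\sum_{i=1}^{D}\mu_i^{\,n}$, where $\mu_1,\dots,\mu_D$ is the spectrum of $\bm{\hat P}$ listed with multiplicity. Crucially, this identity holds irrespective of whether $\bm{\hat P}$ is diagonalizable, so the nontrivial Jordan blocks occurring in types $A3$ and $B$ do not obstruct it: the invariants $P_n$ see only the eigenvalues, not the Jordan structure.

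The second ingredient is that, for the three Segre types under consideration, the spectrum is entirely real. This can be read off the canonical forms in \eqref{canonic}: in type $A1$ the tensor is diagonal in the orthonormal tetrad, while in types $A3$ and $B$ a short computation of $\bm{\hat P}$ acting on the half-null tetrad shows that the repeated eigenvalue associated with the null block is real. Type $A2$ is excluded precisely because it carries a complex-conjugate pair of eigenvalues, which is what would break the argument below. Hence $\mu_i\in\mathbb{R}$ for all $i$, and because $2m$ is even we also record that $P_{2m}=\sum_i\mu_i^{2m}=\sum_i|\mu_i|^{2m}\ge 0$, so both sides of the claimed inequality are manifestly nonnegative.

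It then remains to prove the purely algebraic statement $\big(\sum_i\mu_i^{\,s}\big)^{2m}\le D^{2m-s}\big(\sum_i|\mu_i|^{2m}\big)^{s}$ for real $\mu_1,\dots,\mu_D$ and $1\le s<2m$. First I would pass to absolute values via the triangle inequality, $|P_s|=\big|\sum_i\mu_i^{\,s}\big|\le\sum_i|\mu_i|^{s}$, so that $P_s^{2m}\le\big(\sum_i|\mu_i|^{s}\big)^{2m}$. Then I would apply the power-mean inequality with exponents $s<2m$ (equivalently, Jensen's inequality for the convex map $t\mapsto t^{2m/s}$, convex since $2m/s>1$) to the nonnegative numbers $|\mu_i|$, giving $\big(\tfrac1D\sum_i|\mu_i|^{s}\big)^{1/s}\le\big(\tfrac1D\sum_i|\mu_i|^{2m}\big)^{1/(2m)}$. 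Raising both sides to the power $2ms$ and clearing the factors of $D$ yields $\big(\sum_i|\mu_i|^{s}\big)^{2m}\le D^{2m-s}\big(\sum_i|\mu_i|^{2m}\big)^{s}=D^{2m-s}P_{2m}^{\,s}$, and chaining the two estimates completes the proof.

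I expect the only genuine obstacle to lie in the reduction carried out in the first two steps: one must argue carefully that the invariants depend solely on the real spectrum in the non-diagonalizable cases $A3$ and $B$, and verify the reality of the eigenvalues from the canonical forms, including their generalization to $D$ dimensions. Once that is in place, the inequality is a standard convexity fact, and the equality condition can be read off from that of the power-mean inequality if desired.
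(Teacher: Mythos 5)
Your proposal is correct and takes essentially the same route as the paper: both reduce each invariant to a sum of $n$th powers of $D$ real eigenvalues (counted with algebraic multiplicity) for the types $A1$, $A3$, $B$, and then conclude with the generalized power-mean inequality, the even exponent $2m$ absorbing the signs. The only cosmetic difference is that you justify $P_n=\operatorname{tr}(\bm{\hat P}^{\,n})=\sum_i \mu_i^{\,n}$ by the general Jordan-form argument, whereas the paper obtains the same identities by direct computation from the canonical forms \eqref{canonic}, finding the null-block eigenvalue doubled ($\lambda_{D-1}$ twice for $A3$, $-\lambda_{D-1}/2$ twice for $B$, as in \eqref{IN2} and \eqref{IN3}).
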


\medskip
\noindent\textit{Comments.} The inequality above actually represents an infinite sequence of inequalities. It was proved in the article \cite{WCI} for $\bm R$, $\bm S$, and the traceless part of the energy-momentum tensor $\bm T$ (which follows trivially from the Einstein equations) in static spherically symmetric spacetimes. These spacetimes are of Segre type $A1$ (the Segre type of $\bm R$) and it follows from Lemma \ref{lem1} that $\bm S$, $\bm T$ are also of Segre type $A1$, so the results of the article \cite{WCI} are a special case covered by Theorem \ref{thm1}.

	The Cayley-Hamilton theorem implies that, in $D$ dimensions, one can use the characteristic polynomial to express the higher invariants $P_{n\geq D}$ in terms of lower order invariants $P_1,\dots,P_{D-1}$. However, for \mbox{$m < D/2$}---the case of interest for our inequality---the Cayley-Hamilton theorem provides no additional information. Thus, the inequality above does not follow directly from it. In the next section we present an example of the Schmidt spacetime in which, in the $A2$ Segre regions, the inequality is violated for the Ricci tensor $\bm{R}$ and several choices of $s$ and $m$. In other words, it does not hold trivially.

\begin{proof}

Consider the generalized mean inequality in the following form \cite{WCI}. It holds for real numbers $x_i$ ($i=1,\ldots,D$) and $s,m\in \mathbb{N}$ with $s<2m$
\begin{equation}\label{gamqm}
	\left(\sum_{i=1}^D x_i^s \right)^{2m} \leq D^{2m-s}\left(\sum_{i=1}^D x_i^{2m}\right)^s\;.
\end{equation}
	The even power $2m$ ensures that the signs of $x_i$ do not matter on the right-hand side of the inequality (the inequality is originally stated for $x_i\geq 0$).

	For the $A1$ class we have
	\begin{equation}\label{IN1}
	P_s=\sum_{i=1}^D\lambda_i^s\;.
\end{equation}
Hence $P_s^{2m}\leq D^{2m-s} P^s_{2m}$ is equivalent to the inequality~\eqref{gamqm}.

	For the class $A3$, with $D-1$ functions $\lambda_i$, we have 
\begin{equation}\label{IN2}
	    P_s=\sum_{i=1}^{D-1}\lambda_i^s+\lambda_{D-1}^s=\sum_{i=1}^D\lambda_i^s\;,
\end{equation}
where we have introduced the additional symbol \mbox{$\lambda_D=\lambda_{D-1}$.} Therefore, $P_s^{2m}\leq D^{2m-s} P^s_{2m}$ is again equivalent to the inequality \eqref{gamqm}.

For the Segre class $B$, with $D-1$ functions $\lambda_i$, we find
\begin{equation}
	\begin{split}\label{IN3}
		P_s=&\sum_{i=1}^{D-2}\lambda_i^s+\frac{(-1)^s}{2^{s-1}}\lambda_{D-1}^s\\=&\sum_{i=1}^{D-2}\lambda_i^s+\left(\frac{-1}{2}\right)^s\lambda_{D-1}^s+\left(\frac{-1}{2}\right)^s\lambda_{D-1}^s\;.
    \end{split}
\end{equation}
Set $\lambda'_{D-1}=\left(\frac{-1}{2}\right)\lambda_{D-1}$, $\lambda'_D=\left(\frac{-1}{2}\right)\lambda_{D-1}$, and $\lambda'_i=\lambda_i$ for $i=1,\dots,D-2$. Then
\begin{equation}
	    P_s=\sum_{i=1}^{D}\lambda_i'^s\;,
\end{equation}
and $P_s^{2m}\leq D^{2m-s} P^s_{2m}$ is again equivalent to the inequality \eqref{gamqm}.

\end{proof}

\begin{theorem}\label{thm2}
	Let $(M,g)$ be a $D$-dimensional spacetime, $D\geq 2$, of Segre type $A1$, $A3$ or $B$. Let $I_1$ denote the square of the Weyl tensor and let $K$ be a Kretschmann scalar. If $I_1\geq 0$, then
	\begin{equation}
		2 R_2\leq (D-1)K\;.
	\end{equation}
\end{theorem}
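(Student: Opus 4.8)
The plan is to connect the Kretschmann scalar to the Ricci invariants through the Weyl decomposition of the Riemann tensor, and then to invoke Theorem~\ref{thm1}. First I would write, for $D\geq 3$, the standard decomposition of the Riemann tensor into its Weyl part, a Ricci part, and a scalar-curvature part, and contract it with itself. Since the Weyl tensor is trace-free, every cross term between $C_{abcd}$ and the Ricci and scalar pieces drops out, leaving
\begin{equation}\label{Kdecomp}
  K = I_1 + \frac{4}{D-2}\,R_2 - \frac{2}{(D-1)(D-2)}\,R_1^2\;,
\end{equation}
where $I_1=C_{abcd}C^{abcd}$ is the square of the Weyl tensor, $R_2=R_{ab}R^{ab}$ is the second Ricci invariant, and $R_1=R$ is the Ricci scalar.

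Next, using the hypothesis $I_1\geq 0$, I would discard the Weyl term in \eqref{Kdecomp} to obtain the lower bound $K\geq \frac{4}{D-2}R_2-\frac{2}{(D-1)(D-2)}R_1^2$. Multiplying through by $(D-1)$ and rearranging, the target inequality $2R_2\leq (D-1)K$ reduces to a purely algebraic statement in $R_1$ and $R_2$: collecting the coefficients of $R_2$ via $\tfrac{4(D-1)}{D-2}-2=\tfrac{2D}{D-2}$ and clearing the common positive factor $\tfrac{2}{D-2}$, the claim becomes $R_1^2\leq D\,R_2$.

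Finally, I would recognize $R_1^2\leq D\,R_2$ as exactly the case $s=1$, $m=1$ of Theorem~\ref{thm1} (the constraint $1\leq s<2m$ is met), which applies because $(M,g)$ is of Segre type $A1$, $A3$, or $B$. This closes the argument for $D\geq 3$. The remaining case $D=2$, in which the Weyl decomposition is unavailable, I would treat directly: there $R_{ab}=\tfrac{R}{2}g_{ab}$ forces $2R_2=R_1^2=K=(D-1)K$, so the inequality holds with equality.

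The main difficulty here is bookkeeping rather than conceptual. One must fix the curvature conventions, confirm that the Weyl cross terms genuinely vanish so that \eqref{Kdecomp} carries precisely the stated coefficients, and verify the low-dimensional edge cases (namely $D=2$ handled separately, and $D=3$ where $C_{abcd}\equiv 0$ makes $I_1\geq 0$ automatic). Once \eqref{Kdecomp} is established, the reduction to Theorem~\ref{thm1} is immediate, and the role of the Segre-type hypothesis enters solely through that single application.
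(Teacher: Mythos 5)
Your proposal is correct and follows essentially the same route as the paper's proof: the identical Weyl decomposition $K=I_1+\frac{4}{D-2}R_2-\frac{2}{(D-1)(D-2)}R_1^2$, the same algebraic reduction of $2R_2\leq(D-1)K$ to $R_1^2\leq D\,R_2$ using $I_1\geq 0$, the same single invocation of Theorem~\ref{thm1} with $s=m=1$, and the same separate dispatch of the $D=2$ case via $K=2R_2=R_1^2$. Your algebra (the coefficient collection $\tfrac{4(D-1)}{D-2}-2=\tfrac{2D}{D-2}$) checks out, so there is nothing to fix.
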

\begin{proof}
	The theorem above follows directly from \mbox{Theorem \ref{thm1}} applied to the Ricci tensor and the inequality presented in the article \cite{WCI}.

	For $D=2$ we have $K=2R_2=R^2_1$, so $2 R_2\leq (D-1)K$ reduces to $R_1^2\leq R_1^2$, which is trivially true.

	For $D\geq 3$ the Kretschmann scalar can be written in terms of the Ricci invariants $R_1$, $R_2$ and the invariant $I_1=C_{abcd}C^{abcd}$, where $C_{abcd}$ is the Weyl tensor
\begin{equation}
	K=I_1+\frac{4}{D-2}R_2-\frac{2}{(D-1)(D-2)}R_1^2\;.
\end{equation}
Rewriting $2 R_2\leq (D-1)K$ gives
\begin{equation}
R_1^2\leq DR_2+\frac{1}{2}(D-2)(D-1)I_1\;.
\end{equation}
Since by our assumptions $I_1\geq 0$, it suffices to use Theorem \ref{thm1} with $s=m=1$ (it yields $R_1^2\leq D R_2$) to complete the proof.
\end{proof}

The Weyl tensor vanishes for $D=3$. It also vanishes for $D=4$ in spacetimes of Petrov type O; hence $I_1=0$ in both cases. In $D=4$ and for Petrov types N, III the Weyl tensor is nonzero but still satisfies $I_1=0$. In all these cases, Theorem \ref{thm1} with $s=m=1$ and Theorem \ref{thm2} reduce to the same inequality.

\section{Examples and discussion}

We point out that most classical fields found in nature yield, via the Einstein equations, a Ricci tensor of Segre type $A1$, so Theorems \ref{thm1}, \ref{thm2} apply to a wide class of physically relevant spacetimes \cite{Mart_n_Moruno_2021}.

The special case $s=1$, $m=1$ of the inequality from Theorem \ref{thm1}, namely $R_1^2\leq D R_2$, together with the inequality from Theorem \ref{thm2} were proved for the Ricci tensor in the article \cite{WCI} for static spacetimes, generalized Robertson-Walker spacetimes, and generalized Bianchi I spacetimes. All of these are particular members of class $A1$. The inequality from Theorem \ref{thm1} was also proved in the article \cite{WCI} for $\bm R$, $\bm S$, and $\bm T$ in static spherically symmetric spacetimes.

Lorentzian Einstein spaces are of type $A1$, and it is immediate that Theorems \ref{thm1}, \ref{thm2} hold in them.

In Segre class $A3$ there are also spacetimes of physical interest, such as those sourced by a null-dust energy-momentum tensor $T_{ab}=\rho(x)k_a k_b$, where $\bm k\cdot \bm k=0$. An important example is the Vaidya spacetime \cite{vaidya}. 

Using the Einstein equations, for this traceless null-dust tensor we have 
$$R_{ab}=8\pi(T_{ab}-\frac{1}{2}Tg_{ab})=8\pi T_{ab}=8\pi\rho(x)k_a k_b\;.$$ Hence, $R_s=0$ for $s\in \mathbb{N}$, and Theorem \ref{thm1} holds trivially.
The Ricci tensor $R^a_{\;b}$ has one eigenvalue $0$ of algebraic multiplicity $4$. There exists one null and two spacelike eigenvectors associated with this eigenvalue. The generalized eigenspace (primary subspace) corresponding to the null vector is two-dimensional---the nilpotency of $R^a_{\;b}$ is $2$, hence the size of the corresponding Jordan block is also $2$. Such spacetimes are of Segre type $A3$ $[(11,2)]$ ($[4N]_{(2)}$ in Plebański notation). 

Next, we restrict our attention to the Vaidya spacetime. In standard coordinates we have $$K=I_1=48m(u)^2/r^6>0\;,$$ where $u$ is a null coordinate, $r$ is the areal radius and $m(u)$ is the mass function. Since $I_1>0$, Theorem \ref{thm2} also holds for the Vaidya spacetime.

The unphysical Schmidt metric \cite{Schmidt_2003} considered in the paper \cite{WCI} has a form
\begin{equation}
    ds^2=-dt^2+2\,y\,z\,dt dx+dx^2+dy^2+dz^2\;.
\end{equation}
It is possible to solve the Ricci eigenvalue problem for this metric directly, but resulting formulas are too long to be usefully quoted here. One can show that the Ricci tensor is of Segre type $A1$ only for $y\,z=0$. Namely, for $y=z=0$ it is of Segre type $A1$ $[(111,1)]$ ($[4T]_{(1)}$ in Plebański notation). For $y\,z=0$ but $y\neq0$ or $z\neq 0$ it has Segre type $A1$ $[1(11,1)]$ ($[S-3T]_{(11)}$  in Plebański notation). We verified that Theorem \ref{thm1} holds in the $A1$ region. In the remaining region of the spacetime the Ricci tensor is of Segre type $A2$, and our theorems do not apply. In particular, it has type $[11,Z\bar{Z}]$ ($[S_1-S_2-Z-\bar{Z}]_{(1111)}$ in Plebański notation). The second possibility within the $A2$ class (a spatial eigenvalue of algebraic multiplicity $2$) corresponds to vanishing of the $12$th order polynomial. One can show that it is equivalent to vanishing of an explicitly nonnegative polynomial that is equal to zero only when $y=z=0$
\begin{equation}\label{poly}
	(y^4+z^4)\,(z^2\,y^2-1)^2+2\,y^2\,z^2\,(y^4\,z^4+14\,y^2\,z^2+17)=0\;.
\end{equation}

\begin{figure}[t!]
    \centering
    \includegraphics[width=0.45\textwidth]{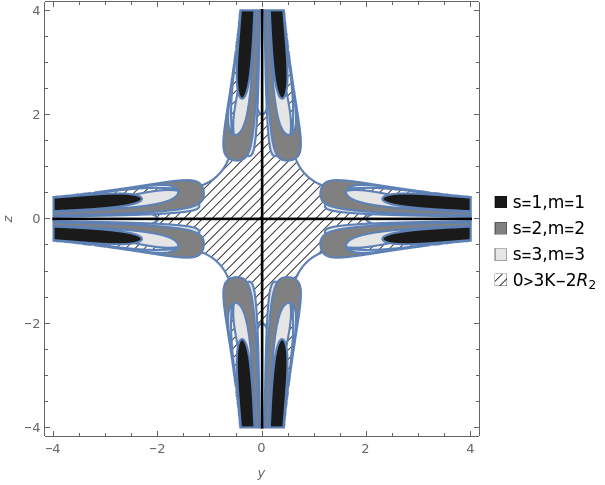} 
	\caption{Regions where the inequalities $R_s^{2m}\leq D^{2m-s} R_{2m}^s$ and $2 R_2\leq (D-1)K$ are violated for the Schmidt spacetime. The bold black lines correspond to $y\,z=0$, where the Ricci tensor is of Segre type $A1$ and Theorem \ref{thm1} holds.}
    \label{fig1}
\end{figure}

Figure \ref{fig1} shows the regions in which the inequalities $R_s^{2m}\leq D^{2m-s} R_{2m}^s$ and $2 R_2\leq (D-1)K$ are violated. These regions intersect, but neither contains the other. We also note that the inequalities stated in Theorems \ref{thm1} and \ref{thm2} can hold in some regions where the Ricci tensor is of Segre type $A2$, which are not covered by our theorems.

In special cases, Theorem \ref{thm1} leads to other nontrivial inequalities. Ricci invariants are often defined in terms of the traceless Ricci tensor $\bm S$ rather than $\bm R$. Consider the class of nonvacuum spacetimes with vanishing \mbox{$S_3=S^{a}_{\;b}S^b_{\;c}S^c_{\;a}$,} studied in the papers \cite{Lake:2019rwe}, \cite{Bisson:2023sxn}. (Note that invariant indices are often labeled differently from ours, in particular, $r_2=S_3$ is the traceless analogue of $R_3$.) We have
\begin{equation}
S_3=0\iff D^2\,R_3-3D\,R_1R_2+2\,R_1^3=0\;.
\end{equation}
The spacetimes analyzed in the papers \cite{Lake:2019rwe}, \cite{Bisson:2023sxn} are static and, as such, belong to Segre class $A1$. We note that the much-misinterpreted solution to the Einstein equations, the so-called Kiselev black hole \cite{Kiselev:2002dx}, belongs to this static $S_3=0$ class. 

Using Theorem \ref{thm1} with $s=1$, $m=1$ and assuming $R_1\neq 0$, we obtain the nontrivial inequality
\begin{equation}
	D^2 \frac{R_3}{R_1^3}\geq 1\;.
\end{equation}
If $R_1=0$, then $S_3=0$ implies $R_3=0$.

\section{Conclusions}

Any symmetric rank-2 tensor admits a simple canonical form which follows from the Segre classification. This form is particularly useful for studying inequalities between curvature invariants. Inequalities such as those in the paper \cite{WCI} can be extended to spacetimes of specific Segre types and to other symmetric rank-2 tensors. This approach is more efficient than working directly with the metric. 

We have proved two theorems. Theorem \ref{thm1} establishes an infinite family of inequalities among invariants constructed out of rank-2 symmetric tensors, in particular among the Ricci invariants. These inequalities hold for tensors of Segre types $A1$, $A3$, and $B$, but are not necessarily satisfied for tensors of type $A2$. Theorem \ref{thm2} relates the second Ricci invariant to the Kretschmann scalar and follows naturally from Theorem \ref{thm1}. Both theorems generalize the results of the paper \cite{WCI}. 

The inequalities among traces of $n$th powers of the energy-momentum tensor do not impose additional constraints on physical quantities (such as the energy density or pressure of a perfect fluid). However, they do constrain, via the Einstein equations, the class of physically relevant spacetime geometries. 

It follows from Lemma \ref{lem1} that, in general relativity, the Segre type of the Ricci tensor is the same as that of the energy-momentum tensor. If at least one of the inequalities from Theorem \ref{thm1} fails for the Ricci tensor, then the energy-momentum tensor is necessarily of Segre type $A2$. Type $A2$ corresponds to type $IV$ in the Hawking-Ellis classification \cite{Hawking:1973uf}. In the Segre classification, it is singled out as the only class with no timelike or null eigenvector. As pointed out by Hawking and Ellis \cite{Hawking:1973uf}, there are no observed fields whose energy-momentum tensors have this form. In fact, all classical energy conditions fail for this type of the energy-momentum tensor \cite{ec2017}. Therefore, if the Ricci tensor violates at least one inequality from Theorem \ref{thm1}, then the spacetime is unphysical. The reverse implication does not hold. The Schmidt metric presented in this paper shows that there exist unphysical spacetimes in which at least some of the inequalities from Theorem \ref{thm1} hold. 

The restriction on physical geometries which follows from Theorem \ref{thm1} has interesting consequences for the singularity formation in general relativity. The blow-up of the trace of the $s$th power of the Ricci tensor $R_s$, must be accompanied by the blow-up of the trace of the $2m$th power of the Ricci tensor $R_{2m}$, for every $1\leq s<2m$.

Theorem \ref{thm2} shows that, under additional assumptions, the inequalities from Theorem \ref{thm1} impose constraints on other invariants, such as the Kretschmann scalar. This fact has been recently used to derive constraints on regular black holes supported by nonlinear electromagnetic fields \cite{bhnle}.

Finally, we point out, that in the vast majority of metric theories of gravity other than general relativity, the Segre types of the Ricci tensor and the energy-momentum tensor differ. The algebraic restrictions on physical geometries derived in this paper therefore distinguish general relativity from other theories.

\section*{Acknowledgments}

Some calculations were performed in Wolfram Mathematica using the xAct package \cite{xAct}. AI language tools (OpenAI GPT, Microsoft Copilot, and Google Gemini) were used for language polishing and for suggesting an elegant form for the polynomial~\eqref{poly}, which we verified independently.

\section*{Data availability}

No data were created or analyzed in this study.

\bibliographystyle{apsrev4-1}
\setcitestyle{authortitle}
\bibliography{report}

\begin{thebibliography}{18}%
\makeatletter
\providecommand \@ifxundefined [1]{%
 \@ifx{#1\undefined}
}%
\providecommand \@ifnum [1]{%
 \ifnum #1\expandafter \@firstoftwo
 \else \expandafter \@secondoftwo
 \fi
}%
\providecommand \@ifx [1]{%
 \ifx #1\expandafter \@firstoftwo
 \else \expandafter \@secondoftwo
 \fi
}%
\providecommand \natexlab [1]{#1}%
\providecommand \enquote  [1]{``#1''}%
\providecommand \bibnamefont  [1]{#1}%
\providecommand \bibfnamefont [1]{#1}%
\providecommand \citenamefont [1]{#1}%
\providecommand \href@noop [0]{\@secondoftwo}%
\providecommand \href [0]{\begingroup \@sanitize@url \@href}%
\providecommand \@href[1]{\@@startlink{#1}\@@href}%
\providecommand \@@href[1]{\endgroup#1\@@endlink}%
\providecommand \@sanitize@url [0]{\catcode `\\12\catcode `\$12\catcode
  `\&12\catcode `\#12\catcode `\^12\catcode `\_12\catcode `\%12\relax}%
\providecommand \@@startlink[1]{}%
\providecommand \@@endlink[0]{}%
\providecommand \url  [0]{\begingroup\@sanitize@url \@url }%
\providecommand \@url [1]{\endgroup\@href {#1}{\urlprefix }}%
\providecommand \urlprefix  [0]{URL }%
\providecommand \Eprint [0]{\href }%
\providecommand \doibase [0]{http://dx.doi.org/}%
\providecommand \selectlanguage [0]{\@gobble}%
\providecommand \bibinfo  [0]{\@secondoftwo}%
\providecommand \bibfield  [0]{\@secondoftwo}%
\providecommand \translation [1]{[#1]}%
\providecommand \BibitemOpen [0]{}%
\providecommand \bibitemStop [0]{}%
\providecommand \bibitemNoStop [0]{.\EOS\space}%
\providecommand \EOS [0]{\spacefactor3000\relax}%
\providecommand \BibitemShut  [1]{\csname bibitem#1\endcsname}%
\let\auto@bib@innerbib\@empty
\bibitem [{\citenamefont {Stephani}\ \emph {et~al.}(2003)\citenamefont
  {Stephani}, \citenamefont {Kramer}, \citenamefont {MacCallum}, \citenamefont
  {Hoenselaers},\ and\ \citenamefont {Herlt}}]{exact2003}%
  \BibitemOpen
  \bibfield  {author} {\bibinfo {author} {\bibfnamefont {H.}~\bibnamefont
  {Stephani}}, \bibinfo {author} {\bibfnamefont {D.}~\bibnamefont {Kramer}},
  \bibinfo {author} {\bibfnamefont {M.~A.}\ \bibnamefont {MacCallum}}, \bibinfo
  {author} {\bibfnamefont {C.}~\bibnamefont {Hoenselaers}}, \ and\ \bibinfo
  {author} {\bibfnamefont {E.}~\bibnamefont {Herlt}},\ }\href
  {https://doi.org/10.1017/CBO9780511535185} {\emph {\bibinfo {title} {{Exact
  {S}olutions of Einstein's {F}ield {E}quations}}}}\ (\bibinfo  {publisher}
  {Cambridge University Press},\ \bibinfo {year} {2003})\BibitemShut {NoStop}%
\bibitem [{\citenamefont {MacCallum}(2015)}]{MacCallum:2015zaa}%
  \BibitemOpen
  \bibfield  {author} {\bibinfo {author} {\bibfnamefont {M.~A.~H.}\
  \bibnamefont {MacCallum}},\ }\href@noop {} {\  (\bibinfo {year} {2015})},\
  \Eprint {http://arxiv.org/abs/1504.06857} {arXiv:1504.06857 [gr-qc]}
  \BibitemShut {NoStop}%
\bibitem [{\citenamefont {Rudjord}\ \emph {et~al.}(2008)\citenamefont
  {Rudjord}, \citenamefont {Gr{\o}n},\ and\ \citenamefont
  {Hervik}}]{Rudjord_2008}%
  \BibitemOpen
  \bibfield  {author} {\bibinfo {author} {\bibfnamefont {{\O}.}~\bibnamefont
  {Rudjord}}, \bibinfo {author} {\bibfnamefont {{\O}.}~\bibnamefont {Gr{\o}n}},
  \ and\ \bibinfo {author} {\bibfnamefont {S.}~\bibnamefont {Hervik}},\ }\href
  {\doibase 10.1088/0031-8949/77/05/055901} {\bibfield  {journal} {\bibinfo
  {journal} {Phys. Scr.}\ }\textbf {\bibinfo {volume} {77}},\ \bibinfo {pages}
  {055901} (\bibinfo {year} {2008})}\BibitemShut {NoStop}%
\bibitem [{\citenamefont {Clifton}\ \emph {et~al.}(2013)\citenamefont
  {Clifton}, \citenamefont {Ellis},\ and\ \citenamefont
  {Tavakol}}]{Clifton_2013}%
  \BibitemOpen
  \bibfield  {author} {\bibinfo {author} {\bibfnamefont {T.}~\bibnamefont
  {Clifton}}, \bibinfo {author} {\bibfnamefont {G.~F.~R.}\ \bibnamefont
  {Ellis}}, \ and\ \bibinfo {author} {\bibfnamefont {R.}~\bibnamefont
  {Tavakol}},\ }\href {\doibase 10.1088/0264-9381/30/12/125009} {\bibfield
  {journal} {\bibinfo  {journal} {Class. Quant. Grav.}\ }\textbf {\bibinfo
  {volume} {30}},\ \bibinfo {pages} {125009} (\bibinfo {year}
  {2013})}\BibitemShut {NoStop}%
\bibitem [{\citenamefont {Zakhary}\ and\ \citenamefont
  {McIntosh}(1997)}]{zakhary}%
  \BibitemOpen
  \bibfield  {author} {\bibinfo {author} {\bibfnamefont {E.}~\bibnamefont
  {Zakhary}}\ and\ \bibinfo {author} {\bibfnamefont {C.~B.~G.}\ \bibnamefont
  {McIntosh}},\ }\href {https://doi.org/10.1023/A:1018851201784} {\bibfield
  {journal} {\bibinfo  {journal} {Gen. Relativ. Gravit.}\ }\textbf {\bibinfo
  {volume} {29}},\ \bibinfo {pages} {539} (\bibinfo {year} {1997})}\BibitemShut
  {NoStop}%
\bibitem [{\citenamefont {Draga{\v{s}}evi{\'c}}\ \emph
  {et~al.}(2025)\citenamefont {Draga{\v{s}}evi{\'c}}, \citenamefont
  {Moslavac},\ and\ \citenamefont {Smoli{\'c}}}]{WCI}%
  \BibitemOpen
  \bibfield  {author} {\bibinfo {author} {\bibfnamefont {J.}~\bibnamefont
  {Draga{\v{s}}evi{\'c}}}, \bibinfo {author} {\bibfnamefont {I.}~\bibnamefont
  {Moslavac}}, \ and\ \bibinfo {author} {\bibfnamefont {I.}~\bibnamefont
  {Smoli{\'c}}},\ }\href {https://doi.org/10.1140/epjc/s10052-025-14552-9}
  {\bibfield  {journal} {\bibinfo  {journal} {Eur. Phys. J. C}\ }\textbf
  {\bibinfo {volume} {85}} (\bibinfo {year} {2025})}\BibitemShut {NoStop}%
\bibitem [{\citenamefont {Ellis}\ and\ \citenamefont
  {Garfinkle}(2024)}]{Ellis:2023css}%
  \BibitemOpen
  \bibfield  {author} {\bibinfo {author} {\bibfnamefont {G.~F.~R.}\
  \bibnamefont {Ellis}}\ and\ \bibinfo {author} {\bibfnamefont
  {D.}~\bibnamefont {Garfinkle}},\ }\href {\doibase 10.1088/1361-6382/ad2f14}
  {\bibfield  {journal} {\bibinfo  {journal} {Class. Quant. Grav.}\ }\textbf
  {\bibinfo {volume} {41}},\ \bibinfo {pages} {077002} (\bibinfo {year}
  {2024})}\BibitemShut {NoStop}%
\bibitem [{\citenamefont {Plebański}(1964)}]{Pleb}%
  \BibitemOpen
  \bibfield  {author} {\bibinfo {author} {\bibfnamefont {J.}~\bibnamefont
  {Plebański}},\ }\href@noop {} {\bibfield  {journal} {\bibinfo  {journal}
  {Acta Phys. Polon.}\ }\textbf {\bibinfo {volume} {26}},\ \bibinfo {pages}
  {963} (\bibinfo {year} {1964})}\BibitemShut {NoStop}%
\bibitem [{\citenamefont {Martín-Moruno}\ and\ \citenamefont
  {Visser}(2021)}]{Mart_n_Moruno_2021}%
  \BibitemOpen
  \bibfield  {author} {\bibinfo {author} {\bibfnamefont {P.}~\bibnamefont
  {Martín-Moruno}}\ and\ \bibinfo {author} {\bibfnamefont {M.}~\bibnamefont
  {Visser}},\ }\href {\doibase 10.1103/physrevd.103.124003} {\bibfield
  {journal} {\bibinfo  {journal} {Phys. Rev. D}\ }\textbf {\bibinfo {volume}
  {103}} (\bibinfo {year} {2021}),\ 10.1103/physrevd.103.124003}\BibitemShut
  {NoStop}%
\bibitem [{\citenamefont {Vaidya}(1951)}]{vaidya}%
  \BibitemOpen
  \bibfield  {author} {\bibinfo {author} {\bibfnamefont {P.~C.}\ \bibnamefont
  {Vaidya}},\ }\href {\doibase 10.1007/BF03173260} {\bibfield  {journal}
  {\bibinfo  {journal} {Proc. Indian Acad. Sci. A}\ }\textbf {\bibinfo {volume}
  {33}},\ \bibinfo {pages} {264} (\bibinfo {year} {1951})}\BibitemShut
  {NoStop}%
\bibitem [{\citenamefont {Schmidt}(2003)}]{Schmidt_2003}%
  \BibitemOpen
  \bibfield  {author} {\bibinfo {author} {\bibfnamefont {H.-J.}\ \bibnamefont
  {Schmidt}},\ }\href {\doibase 10.1023/a:1022963624704} {\bibfield  {journal}
  {\bibinfo  {journal} {Gen. Relativ. Gravit.}\ }\textbf {\bibinfo {volume}
  {35}},\ \bibinfo {pages} {937–938} (\bibinfo {year} {2003})}\BibitemShut
  {NoStop}%
\bibitem [{\citenamefont {Lake}(2019)}]{Lake:2019rwe}%
  \BibitemOpen
  \bibfield  {author} {\bibinfo {author} {\bibfnamefont {K.}~\bibnamefont
  {Lake}},\ }\href@noop {} {\  (\bibinfo {year} {2019})},\ \Eprint
  {http://arxiv.org/abs/1912.08295} {arXiv:1912.08295 [gr-qc]} \BibitemShut
  {NoStop}%
\bibitem [{\citenamefont {Bisson}\ and\ \citenamefont
  {Lake}(2023)}]{Bisson:2023sxn}%
  \BibitemOpen
  \bibfield  {author} {\bibinfo {author} {\bibfnamefont {Y.~M.}\ \bibnamefont
  {Bisson}}\ and\ \bibinfo {author} {\bibfnamefont {K.}~\bibnamefont {Lake}},\
  }\href {\doibase 10.1103/PhysRevD.108.104017} {\bibfield  {journal} {\bibinfo
   {journal} {Phys. Rev. D}\ }\textbf {\bibinfo {volume} {108}},\ \bibinfo
  {pages} {104017} (\bibinfo {year} {2023})}\BibitemShut {NoStop}%
\bibitem [{\citenamefont {Kiselev}(2003)}]{Kiselev:2002dx}%
  \BibitemOpen
  \bibfield  {author} {\bibinfo {author} {\bibfnamefont {V.~V.}\ \bibnamefont
  {Kiselev}},\ }\href {\doibase 10.1088/0264-9381/20/6/310} {\bibfield
  {journal} {\bibinfo  {journal} {Class. Quant. Grav.}\ }\textbf {\bibinfo
  {volume} {20}},\ \bibinfo {pages} {1187} (\bibinfo {year}
  {2003})}\BibitemShut {NoStop}%
\bibitem [{\citenamefont {Hawking}\ and\ \citenamefont
  {Ellis}(2023)}]{Hawking:1973uf}%
  \BibitemOpen
  \bibfield  {author} {\bibinfo {author} {\bibfnamefont {S.~W.}\ \bibnamefont
  {Hawking}}\ and\ \bibinfo {author} {\bibfnamefont {G.~F.~R.}\ \bibnamefont
  {Ellis}},\ }\href {\doibase 10.1017/9781009253161} {\emph {\bibinfo {title}
  {{The Large Scale Structure of Space-Time}}}},\ Cambridge Monographs on
  Mathematical Physics\ (\bibinfo  {publisher} {Cambridge University Press},\
  \bibinfo {year} {2023})\BibitemShut {NoStop}%
\bibitem [{\citenamefont {Martin-Moruno}\ and\ \citenamefont
  {Visser}(2017)}]{ec2017}%
  \BibitemOpen
  \bibfield  {author} {\bibinfo {author} {\bibfnamefont {P.}~\bibnamefont
  {Martin-Moruno}}\ and\ \bibinfo {author} {\bibfnamefont {M.}~\bibnamefont
  {Visser}},\ }\href {\doibase 10.1007/978-3-319-55182-1_9} {\bibfield
  {journal} {\bibinfo  {journal} {Fundam. Theor. Phys.}\ }\textbf {\bibinfo
  {volume} {189}},\ \bibinfo {pages} {193} (\bibinfo {year}
  {2017})}\BibitemShut {NoStop}%
\bibitem [{\citenamefont {Bokuli{\'c}}\ \emph {et~al.}(2026)\citenamefont
  {Bokuli{\'c}}, \citenamefont {Juri{\'c}},\ and\ \citenamefont
  {Smoli{\'c}}}]{bhnle}%
  \BibitemOpen
  \bibfield  {author} {\bibinfo {author} {\bibfnamefont {A.}~\bibnamefont
  {Bokuli{\'c}}}, \bibinfo {author} {\bibfnamefont {T.}~\bibnamefont
  {Juri{\'c}}}, \ and\ \bibinfo {author} {\bibfnamefont {I.}~\bibnamefont
  {Smoli{\'c}}},\ }\href {\doibase 10.1103/z7gd-96ms} {\bibfield  {journal}
  {\bibinfo  {journal} {Phys. Rev. D}\ }\textbf {\bibinfo {volume} {113}},\
  \bibinfo {pages} {024044} (\bibinfo {year} {2026})}\BibitemShut {NoStop}%
\bibitem [{\citenamefont {Mart\'{\i}n-Garc\'{\i}a}(2021)}]{xAct}%
  \BibitemOpen
  \bibfield  {author} {\bibinfo {author} {\bibfnamefont {J.~M.}\ \bibnamefont
  {Mart\'{\i}n-Garc\'{\i}a}},\ }\href {http://www.xact.es} {\enquote {\bibinfo
  {title} {\href{http://www.xact.es}{{xAct}: Efficient Tensor Computer
  Algebra}},}\ } (\bibinfo {year} {2021})\BibitemShut {NoStop}%
\end{thebibliography}%

\end{document}